\DeclareMathOperator*{\argmin}{arg\,min}
\DeclareMathOperator*{\argmax}{arg\,max}
\newcommand{\ds}{\displaystyle}
\newcommand{\F}{\mathcal{F}}
\newcommand{\X}{\mathcal{X}}
\newcommand{\Reals}{\mathbb{R}}
\newcommand{\cond}{\:|\:}
\newcommand{\ones}{\mathbf{1}}
\definecolor{gray}{gray}{0.50}
\newtheorem{proposition}{Proposition}
\newtheorem{definition}{Definition}
\begin{document}
 	%
 	\title{Fast Iterative Combinatorial Auctions via Bayesian Learning}
 	\author{
 		Gianluca Brero\\
 		University of Zurich\\
 		{\small\tt brero@ifi.uzh.ch}
 		\And
 		S\'{e}bastien Lahaie\\
 		Google Research\\
 		{\small\tt slahaie@google.com}
 		\And
 		Sven Seuken\\
 		University of Zurich\\
 		{\small\tt seuken@ifi.uzh.ch}
 	}
 	\maketitle
 	
 	\begin{abstract}
 		Iterative combinatorial auctions (CAs) are often used in multi-billion dollar domains like spectrum auctions, and speed of convergence is one of the crucial factors behind the choice of a specific design for practical applications. To achieve fast convergence, current CAs require careful tuning of the price update rule to balance convergence speed and allocative efficiency. \citet{brero2018Bayes} recently introduced a Bayesian iterative auction design for settings with single-minded bidders. The Bayesian approach allowed them to incorporate prior knowledge into the price update algorithm, reducing the number of rounds to convergence with minimal parameter tuning. In this paper, we generalize their work to settings with no restrictions on bidder valuations. We introduce a new Bayesian CA design for this general setting which uses Monte Carlo Expectation Maximization to update prices at each round of the auction.
 		We evaluate our approach via simulations on CATS instances. Our results show that our Bayesian CA outperforms even a highly optimized benchmark in terms of clearing percentage and convergence speed.
 	\end{abstract}
 	
 	
 	\section{Introduction}
 	\label{sec:introduction}
 	In a combinatorial auction (CA), a seller puts multiple indivisible items up for sale among several buyers who place bids on packages of items. By placing multiple package bids, a buyer can express complex preferences where items are complements, substitutes, or both. CAs have found widespread applications, including for spectrum license allocation~\citep{Cramton2013SpectrumAuctionDesign}, the allocation of TV advertising slots \citep{Goetzendorf:2015}, and industrial procurement \citep{Sandholm2013}.
 	
 	Practical auctions often employ \emph{iterative} designs, giving rise to \emph{iterative combinatorial auctions}, where bidders interact with the auctioneer over the course of multiple rounds. A well-known example is the combinatorial clock auction (CCA) which has been used by many governments around the world to conduct their spectrum auctions, and it has generated more than \$20 Billion in revenue since 2008 \citep{ausubel2017practical}. The CCA consists of two phases: an initial clock phase used for price discovery, followed by a sealed-bid phase where bidders can place additional bids.
 	
 	One of the key desiderata for a CA is its speed of convergence because each round can involve costly computations and business modeling on the part of the bidders \citep{kwasnica2005new,milgrom2013designingUS,bichler2017coalition}. Large spectrum auctions, for example, can easily take more than 100 bidding rounds.\footnote{See for example: \url{https://www.ic.gc.ca/eic/site/smt-gst.nsf/eng/sf11085.html}} To lower the number of rounds, in practice many CAs use aggressive price updates (e.g., increasing prices by 5\% to 10\% every round), which can lead to low allocative efficiency~\citep{ausubel2017practical}. Thus, the design of iterative CAs that are highly efficient but also converge in a small number of rounds still remains a challenging problem.

 	\subsection{Machine Learning in Auction Design}
 	AI researchers have studied this problem from multiple angles. One early research direction has been to employ machine learning (ML) techniques in preference elicitation~\citep{lahaie2004applying,blum2004preference}. In a related thread of research, \citeauthor{brero2017probably} (\citeyear{brero2017probably}, \citeyear{brero2018MLelicit}) integrated ML into a CA design, but they used value queries instead of demand queries (prices).
 	
 	In recent work, \citet{brero2018Bayes} proposed a Bayesian price-based iterative CA that integrates prior knowledge over bidders' valuations to achieve fast convergence and high allocative efficiency. In their design, the auctioneer maintains a model of the buyers' valuations which is updated and refined as buyers bid in the auction and reveal information about their values. The valuation model is used to compute prices at each round to drive the bidding process forward. However, their design has two substantial limitations: (i) it only works with single-minded bidders (i.e., each bidder is only interested in one bundle), and (ii) it only allows for Gaussian models of bidder valuations (even if other models are more suitable and accurate). These limitations are fundamental, because their design relies on both assumptions to obtain an analytical form for the price update rule.
 	
 	Similarly to \citet{brero2018Bayes}, \citeauthor{nguyen2014optimizing}  (\citeyear{nguyen2014optimizing}, \citeyear{nguyen2016multi}) studied different ways to determine prices in reverse auctions based on probabilistic knowledge on bidders' values. However, in contrast to the setting studied by \citet{brero2018Bayes}, in these papers bidders' valuations were not combinatorial, and the auctioneer was allowed to propose personalized prices to each bidder.
 	
 	\subsection{Overview of our Approach}
 	In this paper, we generalize the approach by~\citet{brero2018Bayes}. We propose a new, general Bayesian CA that can make use of any model of bidders' valuations and, most importantly, can be applied without any restrictions on the true valuations. At the core of our new auction design is a modular price update rule that only relies on samples from the auctioneer's valuation model, rather than a specific analytic form as used by \citet{brero2018Bayes}. We provide a new Bayesian interpretation of the price update problem as computing the most likely clearing prices given the current valuation model. This naturally leads to an Expectation-Maximization (EM) algorithm to compute modal prices, where valuations are latent variables. The key technical contributions to implement EM are (i) a generative process to sample from the joint distribution of prices and valuations (the expectation) and (ii) linear programming to optimize the approximate log likelihood (the maximization).
 	
 	We evaluate our general Bayesian CA on instances from the Combinatorial Auction Test Suite (CATS), a widely used instance generator for CAs~\citep{leyton2000towards}. We first consider single-minded valuations, and compare against the~\citet{brero2018Bayes} design. The performance of our general Bayesian CA design matches theirs in terms of clearing percentage and speed of convergence, even though their design is specialized to single-minded bidders.
 	Next, we evaluate our design in settings with general valuations, where we compare it against two very powerful benchmarks that use a subgradient CA design with a non-monotonic price update rule. Our results show that, on average (across multiple CATS domains), our general Bayesian CA outperforms the benchmark auctions in terms of clearing percentage and convergence speed.
 	%
 	
 	\paragraph{Practical Considerations and Incentives.} One can view our Bayesian iterative CA as a possible replacement for the clock phase of the CCA. Of course, in practice, many other questions (beyond the price update rule) are also important. For example, to induce (approximately) truthful bidding in the clock phase, the design of good \emph{activity rules} play a major role \citep{ausubel2017practical}. Furthermore, the exact payment rule used in the supplementary round is also important, and researchers have argued that the use of the Vickrey-nearest payment rule, while not strategyproof, induces good incentives in practice \citep{Cramton2013SpectrumAuctionDesign}. Our Bayesian CA, like the clock phase of the CCA, is not strategyproof. However, if our design were used in practice in a full combinatorial auction design, then we envision that one would also use activity rules, and suitably-designed payment rules, to induce good incentives. For this reason, we consider the incentive problem to be orthogonal to the price update problem. Thus, for the remainder of this paper, we follow prior work (e.g., \citet{parkes1999bundle}) and assume that bidders follow myopic best-response (truthful) bidding throughout the auction.
 	
 	%
 	%
 	%
 	

 	\section{Preliminaries}
 	\label{sec:preliminaries}	
 	The basic problem solved by an iterative combinatorial auction
 	is the allocation of a set of items, owned by a seller, among
 	a set of buyers who will place bids for the items during the auction. Let $m$ be the number of items and $n$ be the
 	number of buyers. The key features of the problem are that the
 	items are indivisible and that bidders have preferences over
 	sets of items, called bundles. We represent a bundle
 	using an $m$-dimensional indicator vector for the items it
 	contains and identify the set of bundles as $\X=\{0,1\}^m$.
 	We represent the preferences of each bidder $i$ with a
 	non-negative valuation function
 	$v_i: \mathcal{X} \rightarrow \Reals_{+}$ that is private
 	knowledge of the bidder. Thus, for each bundle
 	$x \in \mathcal{X}$, $v_i(x)$ represents the willigness to
 	pay, or \emph{value}, of bidder $i$ for obtaining $x$. We
 	denote a generic valuation profile in a setting with $n$
 	bidders as $v=(v_1, \ldots, v_n)$.
 	We assume that bidders have no value for the null bundle, i.e., $v_i(\emptyset) = 0$, and we assume free disposal, which implies that $v_i(x') \geq v_i(x)$ for all $x' \geq x$.
 	
 	At a high level, our goal is to design a combinatorial auction that computes an allocation that maximizes the total value to the
 	bidders. An \textit{iterative} combinatorial auction proceeds over
 	rounds, updating a provisional allocation of items to bidders
 	as new information about their valuations is obtained (via the
 	bidding), and updating prices over the items to guide the
 	bidding process. Accordingly, we next cover the key concepts
 	and definitions around allocations and prices.
 	
 	An \emph{allocation} is a vector of bundles,
 	$a = (a_1, \dots , a_n)$, with $a_i$ being the bundle that
 	bidder $i$ obtains. An allocation is \emph{feasible} if it respects
 	the supply constraints that each item goes to at most one
 	bidder.\footnote{We assume that there is one unit of each item for simplicity, but our work extends to multiple units without complications.} Let $\F \subset \X^n$
 	denote the set of feasible allocations. The total value of
 	an allocation $a$, given valuation profile $v$, is defined as
 	\begin{equation}
 	V(a; v) = \sum_{i\in[n]} v_i(a_i),
 	\end{equation}
 	where the notation $[n]$ refers to the index set $\{1,\ldots,n\}$.
 	An allocation $a^*$ is \emph{efficient} if
 	$a^* \in \argmax_{a \in \F} V(a; v)$.
 	In words, the allocation is efficient if it is feasible and maximizes the total value to the bidders.
 	
 	An iterative auction maintains $\emph{prices}$ over bundles of items, which are represented by a price function $\theta: \X \rightarrow \Reals_+$ assigning a price $\theta(x) $ to each bundle $x\in\X$. Even though our design can incorporate any kind of price function $\theta$, our implementations will only maintain prices over items, which are
 	represented by a non-negative vector $p \in \Reals^m_+$; this
 	induces a price function over bundles given by $\theta(x) = \sum_{j \in [m]} p_j x_j$.  Item prices are commonly used in practice as they are very intuitive and simple for the bidders to parse (see, e.g., \citet{ausubel2006clock}).\footnote{We emphasize that, although the framework generalizes conceptually to any kind of price function $\theta$, complex price structures may bring additional challenges from a computational standpoint.}
 	
 	Given bundle prices
 	$\theta$, the \emph{utility} of bundle $x$ to bidder $i$ is
 	$v_i(x) - \theta(x)$. The bidder's \emph{indirect utility} at
 	prices $\theta$ is
 	\begin{equation} \label{eq:indirect-bidder}
 	\textstyle{	U(\theta; v_i) = \max_{x\in \X} \{ v_i(x) - \theta(x)\}}
 	\end{equation}
 	i.e., the maximum utility that bidder $i$ can achieve by choosing among bundles from $\X$.
 	
 	On the seller side, the \emph{revenue} of an allocation $a$ at prices $\theta$ is $\sum_{i \in [n]} \theta(a_i)$. The
 	seller's \emph{indirect revenue} function is
 	\begin{equation} \label{eq:indirect-seller}
 	\textstyle{R(\theta) = \max_{a \in \F} \{ \sum_{i \in [n]} \theta(a_i)\}}
 	\end{equation}
 	i.e., the maximum revenue that the seller can achieve among all feasible allocations.
 	
 	
 	\paragraph{Market Clearing.}
 	
 	We are now in a position to define the central concept in this
 	paper.
 	\begin{definition}
 		Prices $\theta$ are \emph{clearing prices} if there exists a
 		feasible allocation $a \in \F$ such that, at bundle prices $\theta$, $a_i$ maximizes the utility of each bidder $i\in[n]$, and $a$ maximizes the seller's revenue over all feasible allocations.
 	\end{definition}
 	\noindent
 	We say that prices $\theta$
 	\emph{support} an allocation $a$ if the prices and allocation
 	satisfy the conditions given in the definition. The following
 	important facts about clearing prices follow from linear
 	programming duality (see~\citet{bikhchandani2002package} as a
 	standard reference for these results).
 	\begin{enumerate}
 		\item The allocation $a$ supported by clearing prices $\theta$
 		is efficient.
 		\item If prices $\theta$ support some allocation $a$, they
 		support every efficient allocation.
 		\item Clearing prices minimize the following objective function:
 		\begin{equation}\label{eq:unnormalizedClearingPot}
 		W(\theta; v) = \sum_{i \in [n]} U(\theta; v_i) + R(\theta).
 		\end{equation}	
 	\end{enumerate}
 	The first fact clarifies our interest in clearing prices: they
 	provide a certificate for efficiency. An iterative auction can
 	update prices and check the clearing condition by querying the bidders, thereby solving the allocation problem
 	without necessarily eliciting the bidders' complete
 	preferences. The second fact implies that it is possible to
 	speak of clearing prices without specific reference to the
 	allocation they support. The interpretation of clearing prices
 	as minimizers of~(\ref{eq:unnormalizedClearingPot}) in the third
 	fact will be central to our Bayesian approach.\footnote{We emphasize that none of the results above assert the \emph{existence}
 		of clearing prices of the form $\theta(x) = \sum_{j \in [m]} p_j x_j$. The objective
 		in~(\ref{eq:unnormalizedClearingPot}) is implicitly defined in terms of item prices $p$, and for general valuations there
 		may be no item prices that satisfy the clearing price
 		condition~\citep{gul2000english}.}

 	%
 	
 	\paragraph{Clearing Potential.} By linear programming duality~\citep{bikhchandani2001linear}, we have
 	$$ W(\theta; v) \geq V(a; v) $$
 	for all prices $\theta$ and feasible allocations $a$, and the
 	inequality is tight if and only if $\theta$ are clearing and
 	$a$ is efficient. In the following, we will therefore make use
 	of a ``normalized'' version of~(\ref{eq:unnormalizedClearingPot}):
 	\begin{equation}\label{eq:clearingPot}
 	\hat W(\theta;  v) = \bigg( W(\theta;  v) - V(a^*; v) \bigg)\ge  0.
 	\end{equation}	
 	It is useful to view~\eqref{eq:clearingPot} as a potential function
 	that quantifies how close prices $\theta$ are to clearing prices; the potential can reach 0 only if there exist clearing prices.
 	%
 	%
 	We refer to function~\eqref{eq:clearingPot} as
 	the \textit{clearing potential} for the valuation profile $ v
 	$ which will capture, in a formal sense, how likely a price
 	function $\theta$ is to clearing the valuation profile
 	$v$ within our Bayesian framework.
 	%
 	
 	
 	\section{The Bayesian Auction Framework}
 	\label{sec:BayesianMechanism}
 	
 	\begin{algorithm}[t!]
 		\SetEndCharOfAlgoLine{.}		
 		\textbf{Input: }{Prior beliefs $Q^0(v)$.}\\
 		$\theta^0=$ initial prices, $t = 0$\;
 		\Repeat{$a_i^t = b_i^t$ \emph{for each bidder} $i\in [n]$}{
 			$t \leftarrow t+1$\;
 			Observe each bidder $i$'s demanded bundle $b_i^t$ at $\theta^{t-1}$\;
 			Compute revenue-maximizing allocation $a^t$ at $\theta^{t-1}$\;
 			\textbf{Belief update:} Use each bidder $i$'s demand $b_i^t$ and $Q^{t-1}(v)$ to derive $Q^{t}(v)$\;
 			\textbf{Price update:} Use $Q^{t}(v)$ to derive new prices $\theta^{t}$\;
 		}	
 		\caption{Bayesian Auction Framework}
 		\label{alg:Bayes_Elicitation}
 	\end{algorithm}	
 	
 	We now describe the Bayesian auction framework introduced by \citet{brero2018Bayes} (see  Algorithm~\ref{alg:Bayes_Elicitation}). At the beginning of the auction, the auctioneer has a prior belief over bidder valuations which is modeled via the probability density function $Q^0(v)$. First, some initial prices $\theta^0$ are quoted (Line 2). It is typical to let $\theta^0$ be ``null prices'' which assign price zero to each bundle.
 	At each round $t$, the auctioneer observes the demand $b_i^t$ of each bidder $i$ at prices $\theta^{t-1}$ (Line 5), and computes a revenue maximizing allocation $a^t$ at prices $\theta^{t-1}$ (Line 6). The demand observations are used to update the beliefs $Q^{t-1}(v)$ to $Q^{t}(v)$ (Line 7), and new prices $\theta^{t}$ reflecting new beliefs are quoted (Line 8). This procedure is iterated until the revenue maximizing allocation matches bidders' demand (Line 9), which indicates that the elicitation has determined clearing prices.
 	
 	In this paper, we build on the framework introduced by \citet{brero2018Bayes} and generalize their approach to (i) handle any kind of priors and (ii) apply it to settings with no restrictions on the bidders' valuations. This requires completely new instantiations of the \emph{belief update rule} and the \emph{price update rule}, which are the main contributions of our paper, and which we describe in the following two sections.
 	
 	
 	\section{Belief Update Rule}\label{sec:beliefUpdate}
 	In this section, we describe our belief modeling and updating rule, based on Gaussian approximations of the belief distributions (which proved effective in our experiments). We emphasize, however, that the belief update component of the framework is modular and could accommodate other methods like expectation-propagation or non-Gaussian models. A complete description of the rule is provided in Algorithm~\ref{alg:Belief_Update}.
 	
 	The auctioneer first models the belief distribution via the probability density function  $Q(v)=\prod_{i=1}^n Q_i(v_i)$. As the rounds progress, each bidder bids on a finite number of bundles (at most one new bundle per round). Let $B_i$ be the set of bundles that bidder $i$ has bid on up to the current round. We model $Q_i$ as
 	\begin{equation}
 	Q_i(v_i) = \prod_{b_i \in  B_i} Q_i(v_i(b_i)).
 	\end{equation}
 	Note that $Q_i(\cdot)$ assigns equal probability to any two valuations that assign the same values to bundles in $B_i$. {We model the density over each $v_i(b_i)$ using a Gaussian}:
 	\begin{equation}
 	Q_i(v_i(b_i))=\mathcal N(\mu_i(b_i), \sigma_i(b_i)),
 	\end{equation}
 	where $\mathcal N(\mu,\sigma)$ denotes the density function of the Gaussian distribution with mean $\mu$ and standard deviation $\sigma$. By keeping track of separate, independent values for different bundles bid on, the auctioneer is effectively modeling each bidder's preferences using a multi-minded valuation. However, as this is just a model, this does not imply that the bidders' true valuations are multi-minded over a fixed set of bundles.
 	
 	\begin{algorithm}[t!]
 		\SetEndCharOfAlgoLine{.}		
 		\textbf{Input: }{Beliefs $Q^{t-1}(v)$, demand $b_i^t$ of each bidder $i\in[n]$\;}
 		\ForEach{$i\in [n]$}{
 			\eIf{$b_i^t \neq \emptyset$}
 			{
 				\If{$b_i^t\notin B_i$}{
 					$B_i\leftarrow B_i\cup \{b_i^t\}$\;
 					$Q_i^{t-1}(v_i(b_i^t)) = Q({v}^0(b_i^t)).$
 				}
 				$\begin{array}{l} Q_i^{t}(v_i(b_i^t)) \approx \\
 				\hspace{4mm} \Phi(\beta(v_i(b_i^t)-\theta^{t-1}(b_i^t)))\cdot Q_i^{t-1}(v_i(b_i^t)).\end{array}$ \\
 				\ForEach{$b_i\in B_i\setminus{\{b_i^t\}}$}{
 					$Q_i^{t}(v_i(b_i)) = Q_i^{t-1}(v_i(b_i))$\;
 				}
 			}
 			{
 				\ForEach{$b_i\in B_i$}{
 					$\begin{array}{l}Q_i^t(v_i(b_i)) \approx \\
 					\hspace{4mm} \Phi(\beta(\theta^{t-1}(b_i)-v_i(b_i)))\cdot Q_i^{t-1}(v_i(b_i)).\end{array}$
 				}
 			}
 		}
 		\textbf{Output: }{Updated beliefs $Q^{t}(v)$.}
 		\caption{Belief Update Rule}
 		\label{alg:Belief_Update}
 	\end{algorithm}	
 	
 	We now describe how the auctioneer updates $Q^{t-1}(v)$ to $Q^t(v)$ given the bids observed at round $t$. We assume the auctioneer maintains a Gaussian distribution with density $Q(v^0(x)) = \mathcal{N}(\mu^0(x), \sigma^0(x))$ over the value a generic bidder may have for every bundle $x \in \X$.
 	To update beliefs about bidder valuations given their bids, the auctioneer needs a probabilistic model of buyer bidding. According to myopic best-response bidding, at each round $t$, bidder $i$ would report a utility-maximizing bundle $b_i^t \in \X$ at current prices $\theta^{t-1}$. 
 	{Formally, we have that, for each $x \in \mathcal X$,
 		\begin{equation}\label{eq:properBeliefUpdate}
	 		v_i(b^t_i) - \theta^{t-1}(b^t_i) \ge v_i(x) - \theta^{t-1}(x).
 		\end{equation}
 		To properly capture this, the auctioneer needs to update her beliefs over each $v_i(x)$ together with $v_i(b^t_i)$, which is a relatively complex operation for each bidder at each round. In this work we introduce a first, simple approach where the information provided by each bid $b_i^t$ is only used to update the auctioneer's beliefs over value $v_i(b^t_i)$ using that $v_i(b^t_i) - \theta^{t-1}(b^t_i) \ge 0$ (which is the special case of Equation~\eqref{eq:properBeliefUpdate} when $x$ is the empty bundle). According to the myopic best-response model, buyer $i$ would bid on $b_i^t$ with probability 1 if $v_i(b^t_i)>\theta^{t-1}(b^t_i)$, and 0 if $v_i(b^t_i)<\theta^{t-1}(b^t_i)$ (there are no specifications of bidder $i$'s behavior when $v_i(b^t_i)=\theta^{t-1}(b^t_i)$).} 
 	Note that this kind of bidding model is incompatible with Gaussian modeling because it contradicts full support: all bundle values $v_i(b_i^t) < \theta^{t-1}(b_i^t)$ must have probability 0 in the posterior. To account for this, we relax this sharp myopic best-response model to probit best-response, a common random utility model under which the probability of bidding on a bundle is proportional to its utility~\citep{train2009discrete}. 
 	Specifically, we set the probability that bidder $i$ bids on $b_i^t$ at prices $\theta^{t-1}$ proportional to
 	\begin{equation}
 	\Phi(\beta(v_i(b^t_i)-\theta^{t-1}(b^t_i))),
 	\end{equation}
 	where $\Phi$ is the cumulative distribution function of the standard Gaussian distribution {and 
 	$\beta>0$ is a scalar parameter that controls the extent to which $\Phi$ approximates myopic best-response bidding.} 
%
 	
 	Given a bid on bundle $b_i^t \neq \emptyset$ in round $t$, the auctioneer first records the bundle in $B_i$ if not already present (Line~5) and sets $Q^{t-1}_i(v_i(b_i^t))$ to $Q(v^0(b_i^t))$ (Line 6).	
 	The belief distribution over value $v_i(b_i^t)$ is updated to
 	\begin{eqnarray*}
 		Q_i^{t}(v_i(b_i^t)) & = & \mathcal{N}(\mu_i^{t}(b_i^t), \sigma_i^{t}(b_i^t)) \\
 		& \hspace{-24mm} \approx & \hspace{-12mm} \Phi(\beta(v_i(b^t_i)-\theta^{t-1}(b^t_i)))\cdot Q_i^{t-1}(v_i(b_i^t)) \\
 		& \hspace{-24mm} = & \hspace{-12mm} \Phi(\beta(v_i(b^t_i)-\theta^{t-1}(b^t_i)))\cdot \mathcal N(\mu_i^{t-1}(b^t_i), \sigma_i^{t-1}(b^t_i))
 	\end{eqnarray*}
 	(Line 8). To approximate the right-most term with a Gaussian, we use simple moment matching, setting the mean and variance of the updated Gaussian to the mean and variance of the right-hand term, which can be analytically evaluated for the product of the Gaussian cumulative distribution function $\Phi$ and probability density function $\mathcal N$ (see for instance~\citet{williams2006gaussian}).
 	This is a common online Bayesian updating scheme known as \textit{assumed density filtering}, a special case of expectation-propagation~\citep{opper1998bayesian,minka2001family}.
 	If $b_i^t = \emptyset$, then we update the value of \emph{every} bundle $b_i \in B_i$ with an analogous formula (Line 14), except that the probit term is replaced with
 	\begin{equation}
 	\Phi(\beta(\theta^{t-1}(b_i)-v_i(b_i)))	
 	\end{equation}
 	to reflect the fact that declining to bid indicates that $v_i(b_i) \leq \theta^{t-1}(b_i)$ for each bundle $b_i \in B_i$.
 	
 	\section{Price Update Rule}\label{sec:InferringClearingPrices}
 	
 	In this section, we describe our price update rule. A complete description of the rule is provided in Algorithm~\ref{alg:Price_Update}. The key challenge we address in this section is how to derive ask prices from beliefs over bidders' valuations. We transform this problem into finding the mode of a suitably-defined probability distribution over prices, and we then develop a practical approach to computing the mode via Monte Carlo Expectation Maximization.
 	
 	We seek a probability density function $P(\theta)$ over prices whose maxima are equal to those prices that will most likely be clearing under $Q(v)$. As a first attempt, consider an induced density function over clearing prices as given by
 	\begin{equation}\label{eq:pdfPrices}
 	P(\theta)  \propto \int_v \ones\{\hat{W}(\theta;  v) = 0\}\,Q(v)\,dv,
 	\end{equation}		
 	with $P(\theta, v) \propto \ones\{\hat{W}(\theta;  v) = 0\}\,Q(v)$ as the associated joint density function.
 	Recall that $\hat{W}(\theta;  v) = 0$ if and only if prices $\theta$ are clearing for valuations $v$. Thus, under function~\eqref{eq:pdfPrices}, prices $\theta$ get assigned all the probability density of the configurations $v$ for which they represent clearing prices.
 	
 	Although this approach is natural from a conceptual standpoint, it may lead to problems when the Bayesian auction uses specific price structures (e.g., item prices) that cannot clear any valuation in the support of $Q(v)$. It is then useful to introduce price distributions such that, for each price function $\theta$, $P(\theta, v)>0$ for all possible configurations $v$.
 	To obtain a suitable price density function we approximate 	
 	$P(\theta)$ with 	
 	\begin{equation}\label{eq:approxPdfPrices}
 	P_\lambda (\theta)  \propto \int_v e^{-\lambda \hat W(\theta;  v)}\,Q(v)\,dv.
 	\end{equation}
 	The approximation is motivated by the following proposition.
 	
 	\begin{proposition}\label{prop:approxPdfPrices}
 		Assume that $Q(v)$ allows us to define a probability density function over prices via Equation~\eqref{eq:pdfPrices}. Then, for every price function $\theta$,
 		\begin{equation}
 		\lim_{\lambda \to \infty} P_\lambda (\theta) = P (\theta).
 		\end{equation}
 		\begin{proof}
 			We prove the convergence of $P_\lambda (\theta)$ to $P (\theta)$ by separately showing the convergence of the numerator
 			$\int_v e^{-\lambda \hat W(\theta;  v)}\,Q(v) \,dv$ to  $\int_v  \ones\{\hat{W}(\theta;  v) = 0\}\,Q(v)\,dv$
 			and of the normalizing constant $\int_\theta \int_v e^{-\lambda \hat W(\theta;  v)}\,Q(v) \,dv\, d\theta$ to $\int_\theta \int_v  \ones\{\hat{W}(\theta;  v) = 0\}\,Q(v)\,dv\, d\theta$.  We will only show the convergence of the numerator, as the convergence of the normalizing constant follows from very similar reasoning.
 			
 			Given that $\hat W(\theta;  v)\ge0$ for any $\theta$ and $\hat W(\theta;  v)=0$ only if $\theta$ is clearing for $v$, we have that, for each $v$, $\theta$,
 			\begin{equation}
 			\lim_{\lambda \to \infty} e^{-\lambda \hat W(\theta;  v)}\,Q(v) =  \ones\{\hat{W}(\theta;  v) = 0\}\,Q(v).
 			\end{equation}				
 			To achieve convergence in the integral form, we note that, as $e^{-\lambda \hat W(\theta;  v)}$ varies between 0 and 1, $e^{-\lambda \hat W(\theta;  v)}\,Q(v)$ is bounded by the integrable probability density function $Q(v)$. This allows us to obtain convergence in the integral form via Lebesgue's Dominated Convergence Theorem.
 		\end{proof}					
 	\end{proposition}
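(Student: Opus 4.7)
The plan is to prove the pointwise limit by the standard recipe already hinted at in the statement: write $P_\lambda(\theta)$ as the ratio of an unnormalized numerator $N_\lambda(\theta) = \int_v e^{-\lambda\hat W(\theta;v)}Q(v)\,dv$ and a normalizing constant $Z_\lambda = \int_\theta N_\lambda(\theta)\,d\theta$, then pass to the limit in each factor separately using Lebesgue's Dominated Convergence Theorem (DCT).

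For the numerator, I would first observe that for every fixed $v$ and $\theta$, since $\hat W(\theta;v)\ge 0$ with equality iff $\theta$ clears $v$, the pointwise limit
\begin{equation*}
\lim_{\lambda\to\infty} e^{-\lambda \hat W(\theta;v)} = \mathbf{1}\{\hat W(\theta;v)=0\}
\end{equation*}
holds. Since $0\le e^{-\lambda \hat W(\theta;v)}\le 1$ for all $\lambda\ge 0$, the integrand $e^{-\lambda\hat W(\theta;v)}Q(v)$ is dominated (uniformly in $\lambda$) by the integrable function $Q(v)$. Applying DCT then yields $N_\lambda(\theta)\to \int_v \mathbf{1}\{\hat W(\theta;v)=0\}Q(v)\,dv$, which is precisely the unnormalized numerator defining $P(\theta)$ in \eqref{eq:pdfPrices}.

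For the normalizing constant, exactly the same argument works on the joint integral over $(\theta,v)$: the integrand is again dominated by $Q(v)$, which has finite integral (equal to $1$) on the product space restricted to a bounded price domain, or more generally is integrable by the hypothesis of the proposition that \eqref{eq:pdfPrices} defines a valid density (so $\int_\theta\int_v \mathbf{1}\{\hat W(\theta;v)=0\}Q(v)\,dv\,d\theta<\infty$ and is strictly positive). DCT then gives $Z_\lambda\to\int_\theta\int_v \mathbf{1}\{\hat W(\theta;v)=0\}Q(v)\,dv\,d\theta$, the normalizer of $P$. Taking the ratio of the two limits yields $P_\lambda(\theta)\to P(\theta)$.

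The main obstacle I anticipate is the normalizing-constant step: one has to ensure that the limiting normalizer is both finite and nonzero so that the ratio limit is well-defined, and that a single dominating function works for the joint integration over $(\theta,v)$. The finiteness and positivity are exactly what the hypothesis ``$Q(v)$ allows us to define a probability density function over prices via \eqref{eq:pdfPrices}'' buys us; the domination requires that the $\theta$-domain be such that $\int_\theta Q(v)\,d\theta$ is finite for $Q$-a.e.\ $v$, which is implicit in the same assumption (otherwise $P(\theta)$ itself would not be normalizable). Given these, the two DCT applications are routine and the limit follows immediately.
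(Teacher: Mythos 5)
Your proposal is correct and follows essentially the same route as the paper's proof: pointwise convergence of $e^{-\lambda \hat W(\theta;v)}$ to $\ones\{\hat W(\theta;v)=0\}$ from $\hat W \ge 0$, domination by the integrable $Q(v)$, and Lebesgue's Dominated Convergence Theorem applied separately to the numerator and the normalizing constant. Your added attention to the finiteness, positivity, and domination issues for the normalizer is a point the paper dismisses as ``very similar reasoning,'' but it does not change the argument.
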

 	
 	\noindent	
 	We can now interpret $P_{\lambda}(\theta)$ as the marginal probability density function of
 	\begin{equation} \label{eq:join-lambda-density}
 	P_{\lambda}(v,\theta) \propto e^{-\lambda \hat W(\theta; v)}\,Q(v).
 	\end{equation}
 	The standard technique for optimizing density functions like~(\ref{eq:approxPdfPrices}), where latent variables are marginalized out, is Expectation Maximization (EM). The EM algorithm applied to $P_{\lambda}$ takes the following form:
 	\begin{itemize}
 		\item  \underline{E step}: At each step $\tau$, we compute
 		\begin{equation*}
 		\mathbb E_{P_{\lambda}( v|\theta^{\tau-1})} \bigg[\log P_{\lambda}( v,\theta) \bigg ],
 		\end{equation*}
 		where
 		\begin{equation}\label{val-lambda-distrib}
 		P_{\lambda}( v\cond \theta ) = \frac{\ds e^{-\lambda \hat W(\theta;  v)}\,Q(v)}{\ds \int_{ v'}e^{-\lambda \hat W(\theta;  v')}\,Q( v')\,d  v'}.
 		\end{equation}		
 		\item \underline{M step}: Compute new prices
 		\begin{equation*}
 		\theta^{\tau} \in \argmax_\theta \mathbb E_{P_{\lambda}( v|\theta^{\tau-1})} \bigg[\log P_{\lambda}( v,\theta) \bigg ].
 		\end{equation*}
 	\end{itemize}
 	In general, it may be infeasible to derive closed formulas for the expectation defined in the E step. We then use the Monte Carlo version of the EM algorithm introduced by \citet{wei1990monte}. For the E step, we provide a sampling method that correctly samples from~$P_{\lambda}(v | \theta)$, the conditional probability density function of valuations obtained from~(\ref{eq:join-lambda-density}). 	
 	For the M step, we use linear programming to optimize the objective, given valuation samples.
 	
 	\begin{algorithm}[t]
 		\SetEndCharOfAlgoLine{.}		
 		\textbf{Input: }{Current beliefs $Q(v)$\;}
 		$\theta^0=$ initial prices, $\tau = 0$\;
 		\Repeat{$||\theta^{\tau} - \theta^{\tau-1} ||/|| \theta^{\tau-1} || < \varepsilon$ }{
 			$\tau \leftarrow \tau+1$\; 		
 			\ForEach{$k\in [\ell]$}{
 				\Repeat{$ \emph{\texttt{resample}} = 0$}{
 					Set \texttt{resample} = 0\;				
 					Draw $v^{(k)}$ from $Q(v)$\;
 					Set \texttt{resample} = 1 with probability $1-e^{-\lambda \hat W(\theta^{\tau-1};  v^{(k)})}$\;
 				}
 			}
 			Compute $\theta^{\tau}\in \argmin_\theta \sum_{k\in [\ell]} \hat W(\theta;  v^{(k)})$\;
 		}
 		\textbf{Output: }{Prices $\theta^{\tau}$.}
 		\caption{Price Update Rule}
 		\label{alg:Price_Update}
 	\end{algorithm}
 	
 	\paragraph{Monte Carlo EM.} Our Monte Carlo EM algorithm works as follows: At each step $\tau$,
 	\begin{itemize}
 		\item Draw samples $v^{(1)},..., v^{(\ell)}$ from $P_\lambda(v \cond \theta^{\tau-1})$ (Lines~{5-11}).
 		\item Compute new prices
 		\begin{eqnarray*}
 			\theta^{\tau}
 			& \in & \argmax_\theta \frac{1}{\ell} \sum_{k\in[\ell]} \log P_{\lambda}(v^{(k)},\theta) \\ \\
 			& \in & \argmin_{\theta}\sum_{k\in[\ell]} \hat W(\theta;  v^{(k)}) \\ \\
 			& \in & \argmin_{\theta}\sum_{k\in[\ell]} \sum_{i\in [n]} U(\theta;v_i^{(k)}) + \ell R(\theta)
 		\end{eqnarray*}	
 	\end{itemize}
 	where the last step follows from \eqref{eq:unnormalizedClearingPot} and \eqref{eq:clearingPot} (Line 12). Each $\theta^{\tau}$ can be derived via the following linear program (LP): 
 	
 	\begin{equation}\label{eq:LP}
 	\hspace{-32mm}\underset{\theta, \pi_{ik}\ge 0, \pi\ge 0}{\text{minimize}} \quad \sum_{k\in[\ell]} \sum_{i\in[n]} \pi_{ik} + \ell \,  \pi
 	\end{equation}
 	\begin{equation*}
 	\text{s.t.}\quad \pi_{ik} \ge v_i^{(k)}(b_{i})-\theta(b_{i})\quad  \forall i\in [n], k\in [\ell], b_{i}\in B_i,
 	\end{equation*}
 	\vspace{1mm}
 	\begin{equation*}
 	\hspace{-27mm} \pi \ge \sum_{i\in[n]} \theta(a_i) \quad \forall a \in \mathcal F.
 	\end{equation*}
 	
 	Note that, at any optimal solution for LP~\eqref{eq:LP}, each variable $\pi_{ik}$ equals the indirect utility~\eqref{eq:indirect-bidder} of bidder $i$ in sample $k$ at prices $\theta$, while $\pi$ equals the seller's indirect revenue~\eqref{eq:indirect-seller}.
 	Under item prices, $\theta$ can be parametrized via $m$ variables $p_j\ge 0$, and the last set of constraints reduces to $ \pi \ge \sum_{j\in[m]} p_j$. Furthermore, as discussed in Section~\ref{sec:beliefUpdate}, the size of each $B_i$ cannot be larger than the number of rounds. However, note that both the number of constraints and the number of variables $\pi_{ik}$ are proportional to the number of samples $\ell$. In our experimental evaluation we confirm that it is possible to choose the sample size $\ell$ to achieve both a good expectation approximation and a tractable LP size.
 	
 	\paragraph{Sampling Valuations from Posterior (Lines 5-11).}
 	To sample from the posterior $P_{\lambda}(v \cond \theta)$, we use the following generative model of bidder valuations. This generative model is an extension of the one provided by~\citet{brero2018Bayes}, itself inspired by~\citet{sollich2002bayesian}.	
 	
 	\begin{definition}\label{def:generativeModel}[Generative Model] The generative model over bidder valuations, given prices $\theta$, is defined by the following procedure:
 		\begin{itemize}
 			\item Draw $v^{(k)}$ from $Q(v)$ (Line~8).
 			\item Resample with probability $1- e^{-\lambda \hat W(\theta;  v^{(k)})}$ (Line~9). \footnote{Note that, to determine whether to resample $v^{(k)}$, one needs to compute the optimal social welfare $V(a^*; v^{(k)})$ via $\hat W(\theta;  v^{(k)})$ (see Equation~\eqref{eq:clearingPot}), which may require a computationally costly optimization. This is not the case in our implementation as each $v^{(k)}_i $ is a multi-minded valuation over a small set of bundles $B_i$. Alternatively, one can use the ``unnormalized'' $W(\theta;  v^{(k)})$ as a proxy for $\hat W(\theta;  v^{(k)})$, as in the generative model proposed by \citet{brero2018Bayes} for single-minded bidders.
 				When the optimal social welfare is not varying too much across different samples,	
 				this trick provides a good approximation of our generative model. Furthermore, it also did not prevent \citet{brero2018Bayes} from obtaining very competitive results.}
 		\end{itemize}		
 	\end{definition}
 	
 	The following proposition confirms that the generative model correctly generates samples from~(\ref{val-lambda-distrib}).
 	\begin{proposition}\label{prop:sampling}
 		The samples generated by our generative model have probability defined via the density function
 		\begin{equation}
 		P_{gen}(v)  \propto e^{-\lambda \hat W(\theta;  v)}\,Q(v),
 		\end{equation}	
 		which corresponds to $P_{\lambda}( v \cond \theta )$.
 	\end{proposition}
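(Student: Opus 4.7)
The plan is to recognize the generative procedure as standard rejection sampling and apply the elementary argument that characterizes the resulting distribution. First I would check that the acceptance probability is well-defined: by \eqref{eq:clearingPot} we have $\hat W(\theta; v) \ge 0$ for all $v$, hence $e^{-\lambda \hat W(\theta; v)} \in (0,1]$, so it is a valid probability. The inner loop at Lines~7--10 then has the textbook form: propose $v$ from the prior $Q(v)$ and accept with probability $A(v) := e^{-\lambda \hat W(\theta; v)}$, otherwise restart.

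Next I would compute the density of the output. Let $Z = \int A(v') Q(v')\,dv'$ be the per-iteration acceptance probability (which is strictly positive since $A>0$ pointwise and $Q$ is a probability density). A single pass through the loop produces an accepted value with (sub)density $A(v) Q(v)$ and rejects with total mass $1-Z$. Summing over the number of prior rejections gives
\begin{equation*}
P_{gen}(v) = \sum_{j \ge 0} (1-Z)^{j}\, A(v) Q(v) = \frac{A(v) Q(v)}{Z} = \frac{e^{-\lambda \hat W(\theta; v)}\, Q(v)}{Z},
\end{equation*}
so in particular $P_{gen}(v) \propto e^{-\lambda \hat W(\theta; v)} Q(v)$.

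Finally I would compare this expression with the definition of $P_\lambda(v \mid \theta)$ in \eqref{val-lambda-distrib}: the numerator is identical, and the normalizing constant there is exactly $\int_{v'} e^{-\lambda \hat W(\theta; v')} Q(v')\, dv' = Z$, so $P_{gen}(v) = P_\lambda(v \mid \theta)$ as claimed. There is no real obstacle here; the only thing to be careful about is the almost-sure termination of the loop, which follows from $Z > 0$ (so the number of rejections is geometric and finite with probability one), and this in turn justifies both the geometric-series step and the identification of the normalizer.
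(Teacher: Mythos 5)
Your proposal is correct and follows essentially the same route as the paper's own proof: both sum the accepted density $e^{-\lambda \hat W(\theta;v)}Q(v)$ over the geometric number of rejection rounds and read off the proportionality. Your version is slightly more explicit about the positivity of the acceptance mass and the almost-sure termination of the loop, but the argument is the same.
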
	
 	\begin{proof}
 		We denote the probability of resampling as
 		\begin{equation}
 		r=1-\int_v e^{-\lambda \hat W(\theta;  v)}\,Q(v)\,dv.
 		\end{equation}
 		The probability that $v$ will be drawn after $h$ attempts is
 		\begin{equation}
 		P(v, h) =  e^{-\lambda \hat W(\theta;  v)}\,Q(v)\, r^h.
 		\end{equation}
 		Thus, we have that 		
 		\begin{equation}
 		P_{gen}(v) =\sum_{h=0}^\infty P(v, h) \propto  e^{-\lambda \hat W(\theta;  v)}\,Q(v).
 		\end{equation}			
 	\end{proof}					
 	\noindent
 	The $\lambda$ relaxation of the price density function has interesting computational implications in our sampling procedure. The larger the $\lambda$ (i.e., the better the approximation of Equation~\eqref{eq:pdfPrices}), the larger the probability of resampling. Thus, using smaller $\lambda$ will speed up the sampling process at a cost of lower accuracy. From this perspective, $\lambda$ can serve as an annealing parameter that should be increased as the optimal solution is approached.	
 	However, while a larger $\lambda$ increases the probability of finding clearing prices under density function $Q(v)$, it does not necessarily lead to better clearing performances in our auctions. Indeed, $Q(v)$ is affected by the auctioneer's prior beliefs which may not be accurate. In particular, when the observed bids are unlikely under $Q(v)$, it can be useful to decrease $\lambda$ from round to round. In our experimental evaluations, we will simply use $\lambda=1$ and scale valuations between 0 and 10 as (implicitly) done by \citet{brero2018Bayes}. This also keeps our computation practical. We defer a detailed analysis of this parameter to future work.
 	
 	\begin{figure*}[t]
 		\centering
 		\includegraphics[width=1\linewidth]{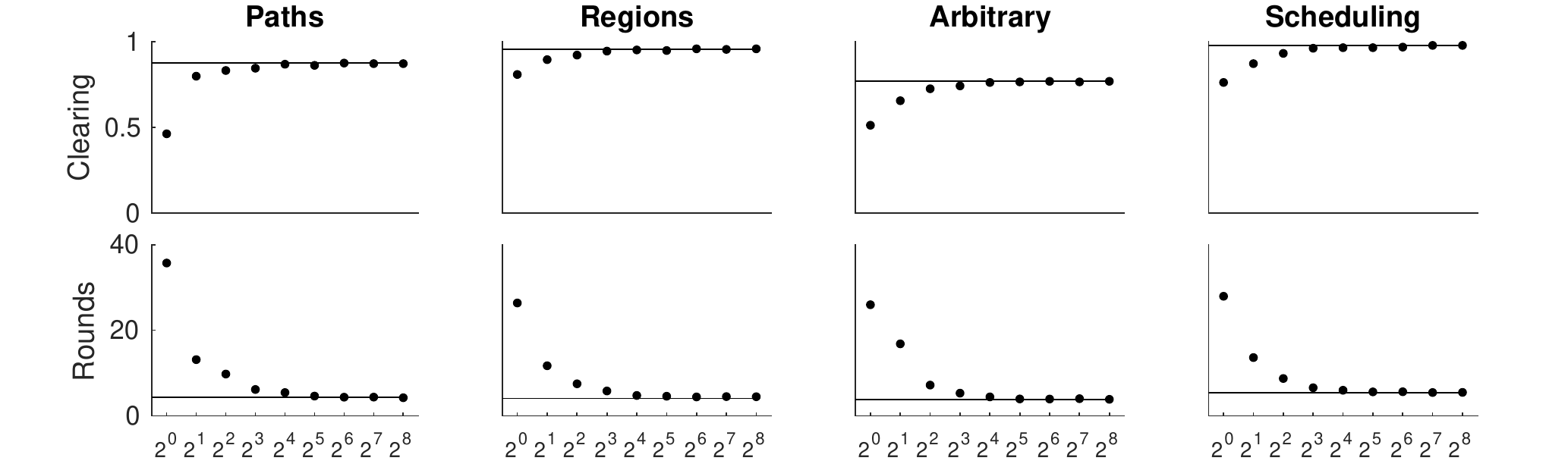}
 		\caption{Average number of cleared instances and rounds under different number of samples $\ell$ in our Monte Carlo Expectation Maximization Algorithm. The horizontal lines indicate the baseline performance of Bayes$^\text{SM}$ \citep{brero2018Bayes}.}
 		\label{fig:samplesMCEM}
 	\end{figure*}
 	
 	\begin{table*}[b]
 		\setlength\tabcolsep{2pt}
 		\begin{center}
 			\begin{tabular}{|c||c|c|c|c|c|c|c|c|}
 				\hline	&
 				\multicolumn{2}{|c|}{\textbf{Paths}} &
 				\multicolumn{2}{|c|}{\textbf{Regions}} &
 				\multicolumn{2}{|c|}{\textbf{Arbitrary}} & 					
 				\multicolumn{2}{|c|}{\textbf{Scheduling}}\\
 				\cline{2-9}
 				& Clearing & Rounds & Clearing & Rounds& Clearing & Rounds& Clearing & Rounds\\
 				\hline
 				\hline
 				SG-Auction$^\text{D}$
 				& 84\% & 19.5 (1.0)
 				& 89\% &  24.8 (1.2)
 				& 65\% &  35.1 (1.8)
 				& 94\% & 21.0 (1.2)
 				\\
 				\hline
 				SG-Auction$^\text{I}$ 	
 				& 88\% &  8.6 (0.4)
 				& 95\% & 15.9 (0.7)
 				& 75\% &  21.3 (1.0)
 				& 97\% & 11.9 (0.6)
 				\\
 				\hline
 				Bayes$^\text{GEN}$
 				& 88\% & 5.2 (0.4)
 				& 96\% & 4.6 (0.3)
 				& 77\% & 4.5 (0.3)
 				& 98\% & 6.3 (0.3)
 				\\
 				\hline
 				Bayes$^\text{SM}$ \citep{brero2018Bayes}
 				& 88\% & 4.9 (0.2)
 				& 96\% & 4.3 (0.2)
 				& 77\% & 4.2 (0.3)
 				& 98\% & 6.1 (0.3)
 				\\
 				\hline
 			\end{tabular}
 			\caption{Comparison of different auction designs in settings with single-minded bidders. Clearing results are averaged over 300 auction instances. Rounds results are averaged over those instances that were cleared by all four auctions (which always included more than 100 instances for each distribution). Standard errors are reported in parentheses.}
 			\label{tab:singleMinded}
 		\end{center}
 	\end{table*}	
 	
 	\section{Empirical Evaluation}
 	\label{sec:empirical-evaluation}
 	
 	We evaluate our Bayesian auction via two kinds of experiments. In the first set of experiments we consider single-minded settings where we compare our auction design against the one proposed by \citet{brero2018Bayes}. These experiments are meant to determine how many samples at each step of our Monte Carlo algorithm we need to draw to match their results. In the second set of experiments we consider multi-minded settings. Here, we will compare our auction design against non-Bayesian baselines.
 	
 	\subsection{Experiment Set-up}
 	\paragraph{Settings.} 
 	We evaluate our Bayesian auction on instances with 12 items and 10 bidders. These instances are sampled from four distributions provided by the Combinatorial Auction Test Suite (CATS): \emph{paths}, \emph{regions}, \emph{arbitrary}, and \emph{scheduling}~\citep{leyton2000towards}.
 	Each instance is generated as follows. First, we generate an input file with 1000 bids over 12 items. Then we use these bids to generate a set of bidders.	
 	To generate multi-minded bidders, CATS assigns a dummy item to bids: bids sharing the same dummy item belong to the same bidder.
 	To generate single-minded bidders we simply ignore the dummy items. In each CATS file,
 	we partition the bidders into a \textit{training set} and a \textit{test set}.
 	The training set is used to generate the prior Gaussian density functions $Q(v^0(x)) = \mathcal{N}(\mu^0(x), \sigma^0(x))$ over bundle values $v^0(x)$ which are used to initialize the auctioneer beliefs in our auction. Specifically, we fit a linear regression model using a Gaussian process with a linear covariance function which predicts the value for a bundle as the sum of the predicted value of its items. The fit is performed using the publicly available GPML Matlab code \citep{williams2006gaussian}. Each bid of each bidder in the training set is considered as an observation. We generate the actual auction instance by sampling 10 bidders uniformly at random from the test set.	We repeat this process 300 times for each distribution to create 300 auction instances which we use for the evaluation of our auction designs.
 	
 	\paragraph{Upper Limit on Rounds.}
 	As mentioned in Section~\ref{sec:preliminaries}, the implementation of our Bayesian auction is based on item prices given by $\theta(x)=\sum_{j\in[m]} p_jx_j$, where $p_j \ge 0$. Item prices may not be expressive enough to support an efficient allocation in CATS instances \citep{gul2000english}. We therefore set a limit of 100 rounds for each elicitation run and record reaching this limit as a failure to clear the market.
 	Note that, under this limit, some instances will only be cleared by some of the auctions that we test. To avoid biases, we always compare the number of rounds on the instances cleared by all auctions we consider.\footnote{Alternatively, one could allow for more than 100 rounds on each instance where item clearing prices are found by any of the tested auctions. However, relaxing the cap on the number of rounds can lead to outliers with a very high number of rounds which can drastically affect our results.}
 	
 	
 	\begin{table*}[t!]
 		\setlength\tabcolsep{2pt}
 		\begin{center}
 			\begin{tabular}{|c||c|c|c|c|c|c|c|c|}
 				\hline &
 				\multicolumn{2}{|c|}{\textbf{Paths}} &
 				\multicolumn{2}{|c|}{\textbf{Regions}} &
 				\multicolumn{2}{|c|}{\textbf{Arbitrary}} & 					
 				\multicolumn{2}{|c|}{\textbf{Scheduling}}\\
 				\cline{2-9}
 				& Clearing & Rounds & Clearing & Rounds& Clearing & Rounds& Clearing & Rounds\\
 				\hline \hline
 				SG-Auction$^\text{D}$
 				& 46\% & 22.0 (1.44)	
 				& 75\% & 28.8 (1.4)
 				& 34\% & 36.2 (2.6)
 				& 51\% & 31.2 (2.2)
 				\\
 				\hline
 				SG-Auction$^\text{I}$ 	
 				& 49\% & 9.3 (0.6)
 				& 81\% & 19.0 (0.9)
 				& 42\% & 27.4 (2.1)
 				& 59\% & 21.2 (1.4)
 				\\
 				\hline
 				Bayes$^\text{GEN}$
 				& 47\% & 11.5 (1.3)
 				& 83\% & 8.3 (0.6)
 				& 47\% & 9.7 (0.6)
 				& 57\% & 18.8 (1.4)
 				\\
 				\hline
 			\end{tabular}
 			\caption{Comparison of different auction designs in settings with multi-minded bidders. Clearing results are averaged over 300 auction instances. Rounds results are averaged over instances that were cleared by all three auctions (which always included more than 100 instances for each distribution). Standard errors are reported in parentheses.}
 			\label{tab:clearingMultiMinded}
 		\end{center}
 	\end{table*}	
 	
 	\paragraph{Non-Bayesian Baselines.}
 	We compare our auction design against non-Bayesian baselines that are essentially clock auctions that increment prices according to excess demand, closely related to clock auctions used in practice like the CCA, except that prices are not forced to be monotone.
 	Because the Bayesian auctions are non-monotone, we consider non-monotone clock auctions a fairer (and stronger) comparison. The baseline clock auctions are parametrized by a single scalar positive step-size which determines the intensity of the price updates. We refer to these auctions as \textit{subgradient auctions} (SG-Auctions), as they can be viewed as subgradient descent methods for computing clearing prices.
 	
 	To optimize these subgradient auctions we run them 100 times on each setting, each time using a different step-size parameter spanning the interval from zero to the maximum bidder value. When then consider the following baselines:
 	\begin{itemize}
 		\item Subgradient Auction Distribution (SG-Auction$^D$): the subgradient auction using the step-size parameter that leads to the best performance \emph{on average} over the auction instances generated from any given distribution.
 		\item Subgradient Auction Instance (SG-Auction$^I$): the subgradient auction using the step-size parameter that leads to the best performance \emph{on each individual auction instance}.
 	\end{itemize}
 	Note that these baselines are designed to be extremely competitive compared to our Bayesian auctions. In particular, in SG-Auction$^I$, the auctioneer is allowed to run 100 different subgradient auctions on each instance and choose the one that cleared with the lowest number of rounds.
 	
 	
 	\begin{figure}[b!]
 		\begin{center}
 			\includegraphics[width=0.4\textwidth]{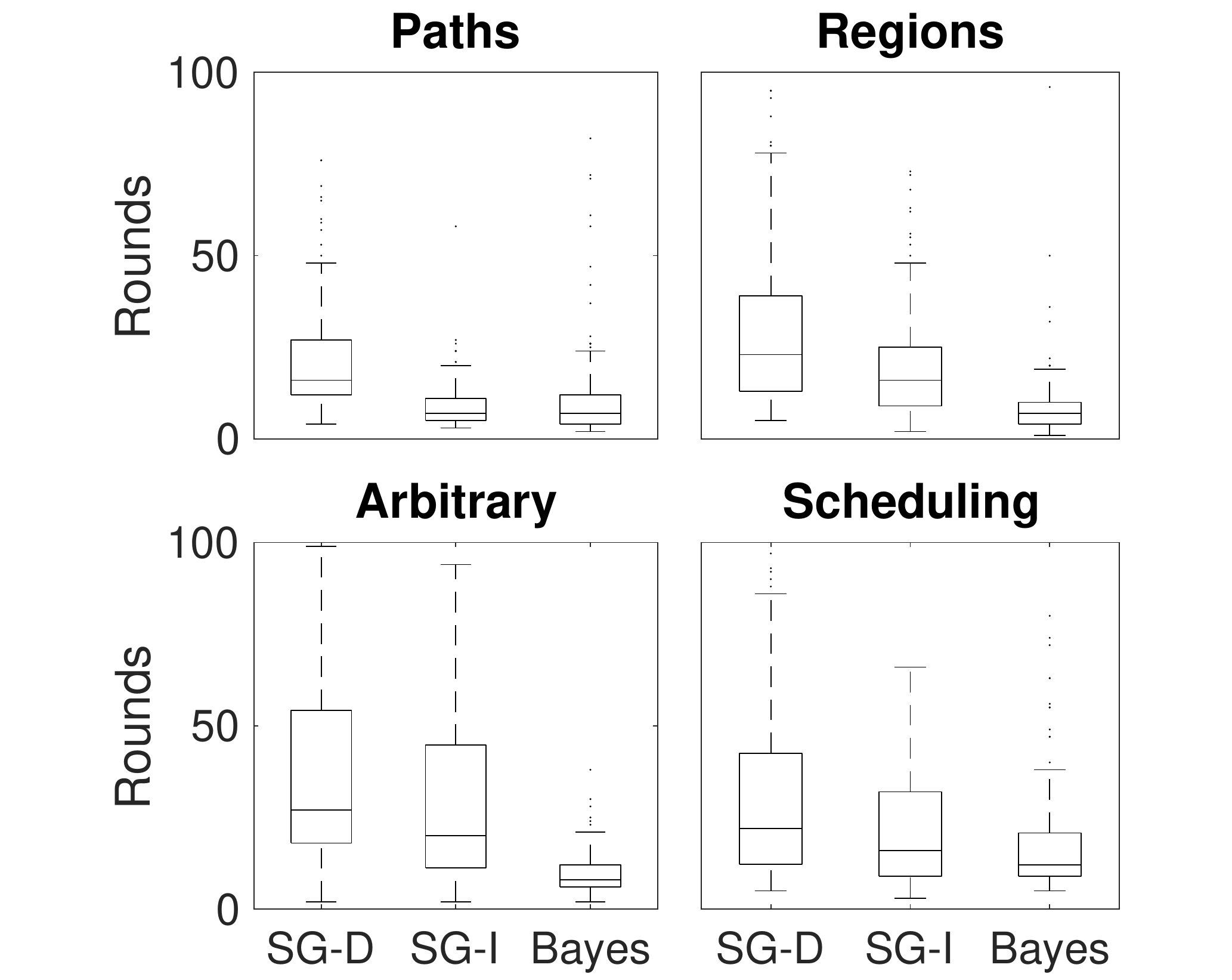}
 		\end{center}
 		\vspace{-0.25cm}
 		\caption{
 			Auction rounds under multi-minded bidders. The
 			box plot provides the first, second and third quartiles; the
 			whiskers are at the 10th and 90th percentile.}
 		\label{figure:CfrAuctionRounds}
 	\end{figure}

 	\subsection{Results for Single-Minded Settings}
 	We now compare our Bayesian auction, denoted Bayes$^\text{GEN}$, against the one proposed by \citet{brero2018Bayes} (which is limited to single-minded settings), denoted Bayes$^\text{SM}$, and against the non-Bayesian baselines.
 	
 	We first consider different versions of our auction design where we vary the number of samples $\ell$ used at each step of the Monte Carlo EM algorithm. As shown in Figure~\ref{fig:samplesMCEM}, our general Bayesian auction is competitive with Bayes$^\text{SM}$ \citep{brero2018Bayes} starting from $\ell = 2^4$ (note that this is true for all distributions even though they model very different domains). This low number of samples allows us to solve the linear program presented in \eqref{eq:LP} in a few milliseconds. For the remainder of this paper, we will use $\ell=2^7$.
 	
 	As we can see from Table~\ref{tab:singleMinded}, both Bayesian designs dominate the non-Bayesian baselines in terms of cleared instances while also being significantly better in terms of number of rounds.
 	
 	\subsection{Results for Multi-Minded Settings}
 	We now evaluate our Bayes$^\text{GEN}$ auction in settings where bidders are multi-minded. As we can observe from Table~\ref{tab:clearingMultiMinded}, our Bayesian auction outperforms both baselines in terms of clearing and rounds (on average, over the different distributions).
 	In Figure~\ref{figure:CfrAuctionRounds}, we present the distributions of results for the auction rounds using box plots. Note that our Bayesian auction always significantly outperforms SG-Auction$^\text{D}$, and it outperforms SG-Auction$^\text{I}$ for three out of the four distributions. Furthermore, note that, while SG-Auction$^\text{I}$  and SG-Auction$^\text{D}$  present very heterogeneous behavior across different distributions, our Bayesian design is much more consistent, with a third quartile that is always below 25 rounds.

 	\section{Conclusion}\label{sec:conclusion}
 	In this paper, we have presented a Bayesian iterative CA for general classes of bidder valuations. Our framework allows the auctioneer to make use of any prior information and any model of bidder values to propose new ask prices at each round of the auction. At the core of our auction is a practical Monte Carlo EM algorithm to compute the most likely clearing prices based on the bidders' revealed information. Our auction design is competitive against the design proposed by~\citet{brero2018Bayes} over single-minded valuations, for which the latter was specially designed. For general valuations, our auction design (without any special parameter tuning) outperforms a very powerful subgradient auction design with carefully tuned price increments.
 	
 	Our work gives rise to multiple promising research directions that can leverage and build on our framework. The most immediate next step is to investigate different valuation models within the belief update component. In this paper, we considered Gaussian models to better compare against prior work, but the analytic convenience of these models is no longer needed; for instance, it may be the case that different kinds of models work best for the different CATS distributions, and such insights could give guidance for real-world modeling. Another intriguing direction is to handle incentives within the framework itself, rather than rely on a separate phase with VCG or core pricing. Future work could investigate whether the auction's modeling component can gather enough information to also compute (likely) VCG payments.
 	
	\section*{Acknowledgments}
	Part of this research was supported by the SNSF (Swiss National Science
	Foundation) under grant \#156836. 	
 	
 	\bibliographystyle{aaai}
 	\bibliography{aaai19}
 	
 \end{document}